\documentclass[11pt]{amsart}

\usepackage[margin=1in]{geometry}
\usepackage{amsmath,amssymb,amsthm,mathtools}
\usepackage{booktabs}
\usepackage[numbers,sort&compress]{natbib}
\usepackage[colorlinks=true,linkcolor=blue,citecolor=blue,urlcolor=blue,
  pdftitle={A Linear Bound on the Rainbow Cycle Number and Approximate EFX},
  pdfauthor={Varun Sivashankar}]{hyperref}

\newtheorem{theorem}{Theorem}

\newtheorem{proposition}[theorem]{Proposition}
\newtheorem{lemma}[theorem]{Lemma}
\theoremstyle{remark}

\newcommand{\efx}{\ensuremath{\mathrm{EFX}}}
\newcommand{\Z}{\mathbb Z}

\title{A Linear Bound on the Rainbow Cycle Number and Approximate EFX}
\author{Varun Sivashankar}
\address{Department of Mathematics, Princeton University,
Princeton, New Jersey, USA}
\email{varunsiva@princeton.edu}
\date{}

\begin{document}

\begin{abstract}
It is open whether every fair-division instance with additive valuations
admits a complete envy-free-up-to-any-good (EFX) allocation.  A well-studied
relaxation allows some goods to remain unallocated and asks for
$(1-\varepsilon)$-EFX.  The rainbow cycle number $R(d)$ was introduced to
study this problem: upper bounds on $R(d)$ yield approximate EFX allocations
with few unallocated goods.  The best previous bound,
$R(d)=O(d\log d)$, gives $O_\varepsilon(\sqrt{n\log n})$ unallocated goods.
We resolve the conjecture that $R(d)$ is linear by proving $R(d)<ed$.  It
follows that every instance with $n$ agents admits a partial
$(1-\varepsilon)$-EFX allocation with $O(\sqrt{n/\varepsilon})$ unallocated
goods.  This is the best possible asymptotic guarantee on the number of
unallocated goods obtainable from the rainbow-cycle reduction.  We also give
a randomized algorithm that finds such an allocation in expected time
polynomial in the input size and $1/\varepsilon$.
\end{abstract}

\maketitle

\section{Introduction}

Consider $n$ agents with nonnegative additive valuations over a set of
indivisible goods.  A partial allocation $(X_1,\ldots,X_n)$ is
$\alpha$-\efx{} if, for every pair of agents $i,j$ and every good $g\in X_j$,
\[
  v_i(X_i)\ge \alpha\,v_i(X_j\setminus\{g\}).
\]
The case $\alpha=1$ is envy-freeness up to any good, or \efx.  Whether a
complete \efx{} allocation always exists for additive valuations is a central
open problem in discrete fair division.  The notion was introduced under this
name by \citet{CaragiannisKKMPSW}.  Complete \efx{} allocations are known to
exist for two agents with general monotone valuations \citep{PlautRoughgarden}.
For three agents, \citet{ChaudhuryGM} proved existence for additive valuations,
and the problem remains open for four or more additive agents; see
\citet{AmanatidisSurvey} for a survey.

One relaxation keeps all goods but weakens the fairness guarantee: a complete
$(\phi-1)$-\efx{} allocation, where $\phi$ is the golden ratio, can be found
in polynomial time for additive valuations \citep{AmanatidisMN}.  Another
allows goods to remain unallocated; this is often called \emph{\efx{} with
charity} \citep{CaragiannisGH}.  \citet{ChaudhuryKMS} proved that exact
\efx{} is possible with at most $n-1$ unallocated goods.
\citet{BergerCFF} improved this to $n-2$ for additive valuations, and
\citet{Mahara} extended the $n-2$ guarantee to general monotone valuations.
Such partial allocations can also retain substantial Nash welfare
\citep{CaragiannisGH}.

A line of work combines these two relaxations: it asks for
$(1-\varepsilon)$-\efx{} while allowing some goods to remain unallocated,
with the aim of leaving far fewer goods unallocated than is currently known
for exact \efx.  \citet{ChaudhuryGMMM} introduced an interesting
combinatorial problem called the rainbow cycle number, denoted by $R(d)$, and
showed how upper bounds on $R(d)$ give bounds on the number of unallocated
goods.  Successive improvements to $R(d)$ produced successively better
\efx{} guarantees
\citep{BerendsohnBK,AkramiACGMM,ChashmJahanSSS}.  Before this work, the best
bound, $R(d)=O(d\log d)$, gave
$O_\varepsilon(\sqrt{n\log n})$ unallocated goods.  We prove the
asymptotically optimal linear bound $R(d)=\Theta(d)$, with the explicit upper
bound $R(d)<ed$, resolving the conjecture that $R(d)=O(d)$.  This gives the
best possible asymptotic guarantee on the number of unallocated goods using
the reduction of \citet{ChaudhuryGMMM}.

\begin{samepage}
\begin{theorem}\label{thm:efx-main}
For every $\varepsilon\in(0,1/2]$, every additive instance with $n$ agents
admits a partial $(1-\varepsilon)$\nobreakdash-\efx{} allocation with
{\small
\[
  O\!\left(\sqrt{\frac{n}{\varepsilon}}\right)
\]
}
unallocated goods.  A randomized algorithm finds one in expected time
polynomial in the input size and $1/\varepsilon$.
\end{theorem}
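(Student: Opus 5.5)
The plan is to go through the reduction of \citet{ChaudhuryGMMM}, which turns an upper bound on the rainbow cycle number into a bound on the number of unallocated goods, and feed into it the new linear estimate $R(d)<ed$ proved later in the paper. Recall the definition. $R(d)$ is the largest $k$ for which there is a $k$-partite digraph with parts $V_1,\dots,V_k$, each of size at most $d$, such that
\begin{itemize}
\item every vertex has an in-neighbour in every other part, and
\item there is no cycle visiting each part at most once (a \emph{rainbow cycle}).
\end{itemize}

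The reduction, as we will use it, says the following. Suppose $R(d)\le h(d)$ for all $d$. Then for every $\varepsilon$ there is a partial $(1-\varepsilon)$-\efx{} allocation with $O(n/k)$ unallocated goods, for any $k$ with $k\,h(k/\varepsilon)\lesssim n$. More concretely, the unallocated count is governed by the largest $d$ such that $d\,h(d/\varepsilon)$-type quantities stay below $n$. The mechanism is this:
\begin{itemize}
\item Start from the standard ``$(1-\varepsilon)$-\efx{} with charity'' potential-improvement process (envy-cycle elimination with most-envious-agent updates).
\item If too many goods sit in the pool, build the rainbow-cycle digraph whose parts are indexed by agents.
\item The vertices of each part are the discretized value levels: the $(1+\varepsilon)$-geometric rounding gives $O(d/\varepsilon)$ relevant vertices per part.
\item A rainbow cycle corresponds to a Pareto/potential improving exchange.
\item Hence the process stalls only when the number of agents exceeds the rainbow cycle bound.
\end{itemize}
I will quote this reduction in the form used by \citet{AkramiACGMM,ChashmJahanSSS}: if $R(d)\le f(d)$, then there is a $(1-\varepsilon)$-\efx{} allocation with $O(m)$ unallocated goods, where $m$ is such that $m\cdot f(m/\varepsilon)\gtrsim n$. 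With $f(d)=O(d\log d)$ this reproduces their $O(\sqrt{n\log n}/\ldots)$, which is a good consistency check on the parameter bookkeeping.

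Substituting $f(d)=ed$ gives $m\cdot e\,m/\varepsilon\ge n$, so $m=\Theta(\sqrt{\varepsilon n})$ pooled goods are needed for a contradiction per part. I expect the true accounting to give $O(\sqrt{n/\varepsilon})$, since each of the $\Theta(\sqrt{n/\varepsilon})$ ``levels'' holds a good. The main obstacle is exactly this bookkeeping: identifying precisely where $1/\varepsilon$ enters the part size $d$, and how the count of unallocated goods relates to $k$ and $d$. I will redo the optimization of the two parameters (group size vs. number of groups) explicitly. Balancing $n/k$ against $k/\varepsilon$ gives $k=\sqrt{\varepsilon n}$ and unallocated $O(\sqrt{n/\varepsilon})$. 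This matches the statement and uses the linear bound precisely where the previous proofs lost the $\log$ factor. The restriction $\varepsilon\le 1/2$ keeps the $\log(1+\varepsilon)\asymp\varepsilon$ rounding estimates valid.

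For the algorithmic part, each improvement step of the reduction increases a lexicographic potential that takes polynomially many values after rounding to powers of $(1+\varepsilon)$. So there are $\mathrm{poly}(n,m,1/\varepsilon)$ steps, and the only nonconstructive ingredient is finding a rainbow cycle when one must exist. Here I will rely on the randomized procedure the paper gives alongside $R(d)<ed$. I expect its proof to be probabilistic, with the constant $e$ arising from choosing a random vertex per part, or a random ordering, and following in-edges. Its success probability is then bounded below by an inverse polynomial, so repetition finds a rainbow cycle in expected polynomial time. Combining these gives the expected-polynomial running time claimed in Theorem~\ref{thm:efx-main}.
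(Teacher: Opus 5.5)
Your skeleton is the same as the paper's: treat the reduction of Chaudhury et al.\ as a black box, feed in $R(d)<ed$ from Theorem~\ref{thm:main}, and use Lemma~\ref{lem:randomized-recovery} for the algorithm. But the proof of this theorem consists of nothing except the bookkeeping through that black box. You never state the black box correctly, and you yourself flag that bookkeeping as ``the main obstacle'' and leave it unresolved.

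\textbf{The mechanism is misdescribed.} The parts of the rainbow-cycle graph are not indexed by agents, and the vertices are not $(1+\varepsilon)$-rounded value levels. The process also does not stall when ``the number of agents exceeds the rainbow cycle bound.''

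\textbf{The two quantitative forms you give are inconsistent.} The first says there are $O(n/k)$ unallocated goods whenever $k\,h(k/\varepsilon)\lesssim n$. This is in fact a correct reparametrization of the true bound with $d=k/\varepsilon$, and your final balancing of $n/k$ against $k/\varepsilon$ agrees with it. The second says there are $O(m)$ unallocated goods where $m\,f(m/\varepsilon)\gtrsim n$, and this is wrong. It yields $\Theta(\sqrt{\varepsilon n})$, which you then override by ``expecting'' the right answer. It also fails your own consistency check: with $f(d)=d\log d$ it gives $\sqrt{\varepsilon n}$ divided by a logarithm, not $\sqrt{n\log n}$.

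\textbf{The missing ingredient} is the exact statement of Proposition~\ref{prop:transfer}, $|P|\le 2n/(\varepsilon d)+R(d)$, together with where its two terms come from.
\begin{itemize}
\item Here $d$ is a demand threshold. A pool good is high-demand if more than $d$ agents find it valuable, meaning $v_i(\{g\})>\varepsilon v_i(X_i)$, and low-demand otherwise.
\item When no update rule applies, each agent finds at most $2/\varepsilon$ pool goods valuable. Double counting then gives $|H|\le 2n/(\varepsilon d)$, and this is where $1/\varepsilon$ enters.
\item The low-demand goods index the parts of the group champion graph. Each part consists of the at most $d$ agents who value that good. A rainbow cycle would give a further value-nondecreasing update, so at termination $|L|\le R(d)$.
\end{itemize}

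With this in hand the proof is one line. Taking $d=\lceil\sqrt{n/\varepsilon}\rceil$ gives
\[
|P|\le 2\sqrt{n/\varepsilon}+e\bigl(\sqrt{n/\varepsilon}+1\bigr)=O\bigl(\sqrt{n/\varepsilon}\bigr).
\]
For the algorithm, apply the second half of Proposition~\ref{prop:transfer} with $T(d)=\lceil ed\rceil$. A graph with more than $T(d)$ parts has $k>ed$, and $d\ge 2$ because $\varepsilon\le 1/2$, so Lemma~\ref{lem:randomized-recovery} applies. Your remarks about a rounded lexicographic potential, and about $\varepsilon\le 1/2$ controlling $\log(1+\varepsilon)$, are unnecessary. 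Polynomial termination and the range of $\varepsilon$ are simply inherited from the cited reduction.
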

\end{samepage}

We now define the rainbow cycle number.  For a positive integer $d$, let
$R(d)$ be the largest $k$ for which there is a directed $k$-partite graph
with nonempty vertex classes $V_1,\ldots,V_k$, each of size at most $d$, such
that every vertex has an in-neighbor in every other class and there is no
directed cycle using at most one vertex from each class.  Such a cycle is
called \emph{rainbow}.  In symbols, the in-neighbor condition is
\begin{equation}\label{eq:in-neighbor}
  N_G^-(v)\cap V_i\ne\varnothing
  \qquad\text{for every $i\ne j$ and every $v\in V_j$}.
\end{equation}
Equivalently, every class may be assumed to have size exactly $d$: pad a
smaller class with vertices of outdegree zero, each given an in-neighbor in
every other class.  This preserves \eqref{eq:in-neighbor} and creates
no directed cycle.

In their original paper, \citet{ChaudhuryGMMM} proved
$d\le R(d)\le d^4+d$ and conjectured that $R(d)=O(d)$;
\citet{BerendsohnBK} obtained $R(d)\le d^{2+o(1)}$ and later stated the
stronger conjecture $R(d)=d$.  Two independent works
then proved $R(d)=O(d\log d)$: \citet{AkramiACGMM} used a probabilistic
argument, while \citet{ChashmJahanSSS} introduced the rainbow path degree and
used a lower bound on it to obtain the same upper bound on $R(d)$.
\citet{KirchwegerSzeider} used SAT modulo symmetries to extend the known
equality $R(3)=3$ to $R(4)=4$.
Table~\ref{tab:progress} shows the corresponding progress for approximate
\efx{} when $\varepsilon$ is fixed.

A related problem requires the in-neighbor maps between classes to be
permutations.  Its rainbow cycle number satisfies $R_p(d)\le2d-4$ for $d\ge4$
\citep{ChashmJahanSSS}.  The \efx{} reduction produces arbitrary multipartite
graphs with the in-neighbor property, so this bound does not apply.

\begin{table}[t]
  \centering
  {
  \small
  \begin{tabular}{@{}p{0.62\textwidth}p{0.28\textwidth}@{}}
    \toprule
    Bound on $R(d)$ & Unallocated goods \\
    \midrule
    $d^4+d$ \citep{ChaudhuryGMMM} & $O(n^{4/5})$ \\
    $d^{2+o(1)}$ \citep{BerendsohnBK} & $n^{2/3+o(1)}$ \\
    {\raggedright $O(d\log d)$
      \citep{AkramiACGMM,ChashmJahanSSS}\par}
      & $O(\sqrt{n\log n})$ \\
    $\boldsymbol{ed}$ (this paper) & $\boldsymbol{O(\sqrt n)}$ \\
    \bottomrule
  \end{tabular}
  }
  \caption{Bounds from the rainbow-cycle reduction.}
  \label{tab:progress}
\end{table}

\subsection{Our Contributions}

\begin{itemize}
\item For every directed $k$-partite graph with classes
$V_1,\ldots,V_k$ satisfying the in-neighbor condition and containing no
rainbow cycle, we prove the packing inequality
\[
  (k-1)!\sum_{i=1}^k |V_i|\le\prod_{i=1}^k|V_i|.
\]
If every class has size at most $d$, it implies $k!\le d^{k-1}$ and hence
$R(d)<ed$, resolving the conjecture that $R(d)=O(d)$.  Together with the
known lower bound $R(d)\ge d$, this determines $R(d)$ up to a constant factor
(Theorem~\ref{thm:main}).

\item Combined with the reduction of \citet{ChaudhuryGMMM}, our bound gives a
partial $(1-\varepsilon)$-\efx{} allocation with
$O(\sqrt{n/\varepsilon})$ unallocated goods.  For fixed $\varepsilon$, this
removes the $\sqrt{\log n}$ factor from the previous guarantee.

\item We give a randomized algorithm that finds a rainbow cycle in total
expected $O(k^2)$ time after preprocessing whenever $k>ed$.  This yields an
expected polynomial-time algorithm for finding the approximate \efx{}
allocation above (Lemma~\ref{lem:randomized-recovery}).

\item The same counting idea proves $H(\ell)=\Theta(\ell^2)$ for the rainbow
path degree introduced by \citet{ChashmJahanSSS}
(Theorem~\ref{thm:path-degree}).

\item We discuss consequences of our bound for other \efx{} variants whose
guarantees depend on $R(d)$.  We also show that it gives a short alternate
proof of the known linear bound for zero-sum cycles in finite groups.
\end{itemize}

\section{From Rainbow Cycles to Approximate EFX}

We use the following result of \citet{ChaudhuryGMMM}.

\begin{proposition}\label{prop:transfer}
For every positive integer $d$ and every $\varepsilon\in(0,1/2]$, an additive
instance with $n$ agents admits a partial
$(1-\varepsilon)$\nobreakdash-\efx{}
allocation
whose set $P$ of unallocated goods satisfies
\[
  |P|\le\frac{2n}{\varepsilon d}+R(d).
\]
Moreover, let $T(d)$ be an integer upper bound on $R(d)$.  If a rainbow cycle
can be found in expected polynomial time in every directed graph with more
than $T(d)$ classes of size at most $d$ satisfying
\eqref{eq:in-neighbor}, then an allocation satisfying
\[
  |P|\le\frac{2n}{\varepsilon d}+T(d)
\]
can be found in expected polynomial time.
\end{proposition}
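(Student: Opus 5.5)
The plan is to reproduce the potential-function argument of \citet{ChaudhuryGMMM}. The three ingredients are the envy-cycle/champion machinery for partial \efx{} allocations, a lexicographic potential that strictly increases, and a charging argument that bounds $|P|$ whenever the algorithm gets stuck.

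We maintain a partial $(1-\varepsilon)$-\efx{} allocation $X$ with pool $P$. The potential is the vector of valuations $(v_1(X_1),\ldots,v_n(X_n))$, compared so that some agent improves by a factor at least $1/(1-\varepsilon)$ while no agent decreases. The standard updates are as follows. If some good in $P$ can be given to some agent without creating strong envy, we add it. If an unenvied agent $i$ and a good $g\in P$ are such that some agent most-envies $X_i\cup\{g\}$, we take a minimal envied subset $Z\subseteq X_i\cup\{g\}$ (the champion's bundle) and use envy-cycle elimination to reallocate. These are the updates of \citet{ChaudhuryKMS}, with the $(1-\varepsilon)$ slack ensuring a multiplicative gain. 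Since valuations only grow geometrically and are bounded above by $v_i(M)$, the number of updates is bounded. In the algorithmic version this bound is polynomial in the input size and $1/\varepsilon$ once values are rounded to powers of $1/(1-\varepsilon)$.

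When no update applies, we bound $|P|$. Call a good $g\in P$ \emph{high-valued} for agent $i$ if $v_i(g)>\varepsilon\, v_i(X_i)$. First we show that at most $2n/(\varepsilon d)$ goods in $P$ are high-valued for fewer than $d$ agents. Here I would charge each pool good to agents and use that an agent with $v_i(g)>\varepsilon v_i(X_i)$ for many goods would have a profitable update; the factor $2/\varepsilon$ comes from a pigeonhole count of total valuation against the bundle. I am not fully sure this is the paper's exact accounting, but the aim is a bound of the form $\frac{2n}{\varepsilon d}$ on the number of poorly-supported goods.

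Now suppose more than $R(d)$ (or $T(d)$) goods in $P$ are each high-valued for at least $d$ agents. For each such good $g$, we form a class $V_g$ of exactly $d$ agents valuing $g$ highly; these are copies, since the same agent may appear in several classes. We put an arc from $(i,g)$ to $(j,h)$ when agent $i$ would envy $X_j\cup\{h\}$ in the champion sense, i.e.\ $i$ prefers $X_j\cup\{h\}$ (with the champion's minimal subset) to $X_i$. The key step, and the one I expect to be the main obstacle, is verifying the in-neighbor condition \eqref{eq:in-neighbor}. For every $h$ and every $(j,h)$ we must exhibit some agent in each other class $V_g$ that champions $X_j\cup\{h\}$. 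This uses maximality: since no update applies, every agent $j$ is envied after adding a high good, and the high-value threshold $\varepsilon v_i(X_i)$ forces an agent of $V_g$ to prefer it.

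Once the in-neighbor condition holds, since there are more than $R(d)$ classes the definition of $R(d)$ gives a rainbow cycle. Its vertices $(i_1,g_1),\ldots,(i_m,g_m)$ use distinct goods, and we must argue they use distinct agents. The rotation giving each $i_t$ the set $X_{i_{t+1}}\cup\{g_{t+1}\}$ then strictly increases the potential, which contradicts termination. Hence $|P|\le 2n/(\varepsilon d)+R(d)$. For the algorithmic statement we use $T(d)$ in place of $R(d)$ and call the assumed expected-polynomial rainbow-cycle finder whenever there are more than $T(d)$ classes. The number of iterations is polynomial and each iteration takes expected polynomial time, so by linearity of expectation the total expected time is polynomial.
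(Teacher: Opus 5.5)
Your overall framework is the right one: the value-nondecreasing update procedure of \citet{ChaudhuryGMMM}, a per-agent bound on valuable pool goods, and a group-champion graph to which $R(d)$ is applied. However, you have swapped the roles of the two kinds of unallocated goods, and neither half of your bound goes through as stated. The lemma you are reaching for is this: when no update applies, each agent $i$ finds at most $2/\varepsilon$ goods $g\in P$ valuable, i.e.\ with $v_i(\{g\})>\varepsilon v_i(X_i)$. This caps the number of valuable pairs $(i,g)$ at $2n/\varepsilon$. Double counting therefore controls the goods valuable to \emph{more} than $d$ agents, since each accounts for more than $d$ pairs; that is how $|H|\le 2n/(\varepsilon d)$ arises. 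It gives nothing of this form for goods valuable to fewer than $d$ agents. Each of those may account for a single pair, so any charging of the kind you describe yields at best about $2n/\varepsilon$ such goods, with no factor $1/d$.

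Conversely, the rainbow-cycle graph must be built on the low-demand goods $L$ (valuable to at most $d$ agents), with the class of $g$ consisting of \emph{all} agents who find $g$ valuable. This choice makes both requirements hold at once. First, $|V_g|\le d$ is automatic. Second, the in-neighbor in class $g$ of a vertex $(j,h)$ is an agent championing a set built from $X_j$ and $g$. The threshold $\varepsilon v_i(X_i)$ is what places such a champion among the agents who find $g$ valuable, i.e.\ in $V_g$; it does not place it in any particular $d$-subset. In your construction, the class of a high-demand good is an arbitrary $d$-subset of its interested agents. As $j$ varies, the required champions can be more than $d$ distinct agents, and nothing forces them into your chosen subset, so \eqref{eq:in-neighbor} can fail. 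The step you flagged as the main obstacle is exactly where the argument breaks, and keeping all interested agents instead would make the classes larger than $d$, so $R(d)$ would no longer apply. With the roles reversed, the argument closes: $|H|\le 2n/(\varepsilon d)$ by counting, and $|L|\le R(d)$ (or $T(d)$ in the algorithmic version), because a rainbow cycle among the classes of $L$ yields another value-nondecreasing update. The rest of your outline is then in line with the cited reduction.
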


The first statement follows from the proof of their Theorem~3.  The second
follows from their Lemmas~7 and~8 and the termination argument in
Section~4.2.  Their procedure makes only polynomially many updates, and each
update takes polynomial time once the required rainbow cycle has been found.
It follows that the full procedure has expected polynomial running time
whenever the rainbow-cycle subroutine does.

We deduce Theorem~\ref{thm:efx-main} here from Proposition~\ref{prop:transfer},
assuming the two rainbow-cycle results proved in the next section.

\begin{proof}[Proof of Theorem~\ref{thm:efx-main}]
Set
\[
  d=\left\lceil\sqrt{n/\varepsilon}\right\rceil
  \qquad\text{and}\qquad
  T(d)=\lceil ed\rceil.
\]
Theorem~\ref{thm:main} gives $R(d)<ed\le T(d)$, while
Lemma~\ref{lem:randomized-recovery} finds a rainbow cycle in expected
polynomial time whenever the number of classes exceeds $T(d)$.  Applying
Proposition~\ref{prop:transfer} therefore gives
\[
  |P|\le\frac{2n}{\varepsilon d}+T(d)
  =O\!\left(\sqrt{\frac n\varepsilon}\right).
\]
\end{proof}

\paragraph{How the reduction works.}
The procedure of \citet{ChaudhuryGMMM} may start from any partial
$(1-\varepsilon)$-\efx{} allocation $X$.  It preserves
$(1-\varepsilon)$-\efx{} and never decreases any agent's value.  For
existence, $X$ may simply be the empty allocation.

Here is the main idea behind the bound in
Proposition~\ref{prop:transfer}.  At the current allocation $X$, call an
unallocated good $g$ valuable to agent $i$ if
\[
  v_i(\{g\})>\varepsilon v_i(X_i).
\]
Fix $d$ and divide the set $P$ of unallocated goods into high-demand goods
$H$, which are valuable to more than $d$ agents, and low-demand goods $L$,
which are valuable to at most $d$ agents.  When none of the update rules of
\citet{ChaudhuryGMMM} applies, every agent finds at most $2/\varepsilon$
unallocated goods valuable.  Double counting the pairs $(i,g)$ for which
$g$ is valuable to $i$ therefore gives
\[
  |H|\le \frac{2n}{\varepsilon d}.
\]
For the low-demand goods, one can define a multipartite directed graph with
one part for each good in $L$ and at most $d$ vertices in each part (called
the group champion graph by \citet{ChaudhuryGMMM}).  This graph satisfies the
in-neighbor condition in the definition of $R(d)$.  A rainbow cycle would give another
value-nondecreasing update, so the procedure cannot stop while
$|L|>R(d)$.  Consequently,
\[
  |P|=|H|+|L|
  \le \frac{2n}{\varepsilon d}+R(d).
\]

\paragraph{The square-root barrier.}
This formula explains both the improvement in
Theorem~\ref{thm:efx-main} and the limitation of the reduction.  The
previous bound $R(d)=O(d\log d)$ gave
$O_\varepsilon(\sqrt{n\log n})$ unallocated goods, whereas
$R(d)<ed$ gives $O(\sqrt{n/\varepsilon})$.  On the other hand, the known
lower bound $R(d)\ge d$ shows that even the conjectured equality
$R(d)=d$ would leave the expression
\[
  \frac{2n}{\varepsilon d}+d.
\]
Its minimum occurs when $d$ is of order $\sqrt{n/\varepsilon}$.
Thus improving the rainbow cycle number alone cannot make this reduction
give $o(\sqrt{n/\varepsilon})$ unallocated goods; a better bound would
require a different fair-division argument.

\section{The Packing Inequality}

\begin{theorem}\label{thm:main}
Let $G$ be a directed $k$-partite graph with nonempty classes
$V_1,\ldots,V_k$.  Suppose that, for every $i\ne j$ and every $v\in V_j$,
there is an edge $u\to v$ with $u\in V_i$.  If $G$ has no rainbow cycle and
$n_i=|V_i|$, then
\begin{equation}\label{eq:packing}
  (k-1)!\sum_{i=1}^k n_i\le \prod_{i=1}^k n_i.
\end{equation}
\par\noindent
If $n_i\le d$ for every $i$, then
\[
  k!\le d^{k-1}.
\]
In particular, $R(d)<ed$.
\end{theorem}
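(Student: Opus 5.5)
The plan is to double count transversals of the partition, that is, sets $T$ containing exactly one vertex from each class. There are exactly $\prod_{i=1}^k n_i$ of them. I will construct an injection into this set from a set of size $(k-1)!\sum_i n_i$.

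First fix, once and for all, a choice function. For every vertex $u\in V_j$ and every $i\ne j$, the in-neighbor condition lets me choose an in-neighbor $f_i(u)\in V_i$. The domain of the injection is the set of pairs $(v,\sigma)$, where $v$ is a vertex, say $v\in V_j$, and $\sigma=(\sigma_1,\ldots,\sigma_{k-1})$ is an ordering of the $k-1$ indices other than $j$. There are $\sum_j n_j\cdot (k-1)!$ such pairs. To such a pair I associate a backward walk:
\[
v_0=v,\qquad v_t=f_{\sigma_t}(v_{t-1})\quad (1\le t\le k-1).
\]
Since $v_t\in V_{\sigma_t}$ and the indices $j,\sigma_1,\ldots,\sigma_{k-1}$ are distinct, the set $T=\{v_0,\ldots,v_{k-1}\}$ is a transversal. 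The walk $v_{k-1}\to v_{k-2}\to\cdots\to v_0$ is a directed Hamiltonian path in the induced subgraph $G[T]$.

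The key step is injectivity, and this is where the hypothesis of no rainbow cycle enters. Any directed cycle inside $G[T]$ uses at most one vertex per class, so it is rainbow. Hence $G[T]$ is acyclic. An acyclic digraph has at most one Hamiltonian path: such a path gives a topological order in which every pair of consecutive vertices is joined by an edge, so the order is forced. Concretely, the first vertex must be the unique source, and one then proceeds inductively. Therefore $T$ determines the sequence $v_{k-1},\ldots,v_0$. This sequence recovers $v=v_0$, and it recovers $\sigma$ as the sequence of classes of $v_1,\ldots,v_{k-1}$. So the map $(v,\sigma)\mapsto T$ is injective, which gives \eqref{eq:packing}. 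Note that I do not need the path to use only $f$-edges in the uniqueness argument; it suffices that it is a Hamiltonian path of $G[T]$.

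For the corollaries, $n_i\le d$ gives $\sum_i n_i\ge k$; with the padding to size $d$ noted earlier, the sum is exactly $kd$. Hence $(k-1)!\,kd\le d^k$, i.e.\ $k!\le d^{k-1}$. Using $k!>(k/e)^k$, we get $(k/e)^k<d^{k-1}\le d^k$, so $k<ed$, and therefore $R(d)<ed$.

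The only delicate point I expect is the uniqueness of the Hamiltonian path in an acyclic digraph. It is standard, but it must be stated carefully: consecutive vertices of a Hamiltonian path are adjacent, so the topological order is unique.
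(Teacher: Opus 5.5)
Your proposal is correct and follows the paper's proof: it uses the same injection from (terminal vertex, ordering of the other classes) to transversals via chosen in-neighbors, and injectivity comes from acyclicity of $G[T]$, where your uniqueness-of-Hamiltonian-paths-in-an-acyclic-digraph phrasing is equivalent to the paper's ``two vertices in opposite order give a closed walk'' argument. The only deviation is that you obtain $k!\le d^{k-1}$ by padding every class to size exactly $d$ rather than by the paper's AM--GM step $k!\le Q^{(k-1)/k}$ with $Q=\prod_i n_i$. The padding is valid, since the added vertices have outdegree zero and $k$ is unchanged, and it is what does the work: the bound $\sum_i n_i\ge k$ comes from nonemptiness, not from $n_i\le d$, and on its own would only give $k!\le d^k$.
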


\begin{proof}
For every $i\ne j$ and every $v\in V_j$, choose one edge $u\to v$ with
$u\in V_i$.  Retaining only these chosen edges cannot create a rainbow cycle,
so we may assume that every vertex has a unique chosen in-neighbor in every
other class.

Fix a permutation $\pi=(\pi_1,\ldots,\pi_k)$ of the classes and a terminal
vertex $v_k\in V_{\pi_k}$.  Working backwards, let
$v_t\in V_{\pi_t}$ be the chosen in-neighbor of $v_{t+1}$ for
$t=k-1,\ldots,1$.  This gives a directed path meeting every class exactly
once:
\[
  v_1\longrightarrow v_2\longrightarrow\cdots\longrightarrow v_k.
\]
Let $T(\pi,v_k):=\{v_1,\ldots,v_k\}$ be its vertex set.

We claim that the map $(\pi,v_k)\mapsto T(\pi,v_k)$ is injective.  Suppose
$T(\pi,v_k)=T(\sigma,w_k)=T$.  If $\pi\ne\sigma$, then some two vertices
$u,v\in T$ occur in opposite orders.  The path defined by one permutation
contains a directed subpath from $u$ to $v$, while the other contains a
directed subpath from $v$ to $u$.  Concatenating these subpaths gives a closed
directed walk, which contains a directed cycle.  Every vertex of this cycle
lies in $T$, so the cycle is rainbow, a contradiction.  Hence $\pi=\sigma$.
Both paths now end in the same class.  Since $T$ contains only one vertex from
that class, $v_k=w_k$.

For a fixed permutation $\pi$, there are $n_{\pi_k}$ choices for the terminal
vertex.  Each class occurs last in exactly $(k-1)!$ permutations, so
injectivity gives
\[
  (k-1)!\sum_{i=1}^k n_i
  \le \prod_{i=1}^k n_i,
\]
which proves \eqref{eq:packing}.

Put $Q=\prod_i n_i$.  The AM--GM inequality gives
\[
  \sum_{i=1}^k n_i\ge kQ^{1/k}.
\]
Combining this with \eqref{eq:packing}, and using
$n_i\le d$, we obtain
\[
  k!\le Q^{(k-1)/k}\le d^{k-1}.
\]
If $k\ge ed$, since $k!>(k/e)^k$, we obtain
\[
  k!>\left(\frac{k}{e}\right)^k\ge d^k\ge d^{k-1},
\]
a contradiction.  Thus $k<ed$, and hence $R(d)<ed$.
\end{proof}

The same proof gives an efficient randomized algorithm for finding the cycle.

\begin{lemma}[Randomized recovery]\label{lem:randomized-recovery}
Let $d\ge2$, and let $G$ satisfy the hypotheses of
Theorem~\ref{thm:main}, with $n_i\le d$ for every $i$.  If $k>ed$, then a
rainbow cycle can be found by a randomized algorithm in expected $O(k^2)$
time, after choosing and storing one in-neighbor from each other class for
every vertex.
\end{lemma}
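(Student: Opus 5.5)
The plan is to turn the injectivity argument from the proof of Theorem~\ref{thm:main} into a sampling procedure. I will sample a uniformly random pair $(\pi,v_k)$, with $\pi$ a uniformly random permutation of the classes and $v_k$ uniform in $V_{\pi_k}$, and build the path $T(\pi,v_k)$ by following the stored chosen in-neighbors backwards. I will then test whether the set $T=T(\pi,v_k)$ spans a rainbow cycle among chosen edges. Sampling the pair uniformly needs care. I will pick the terminal class $c$ with probability proportional to $n_c$, pick $v_k$ uniformly in $V_c$, and then order the remaining classes by a uniformly random permutation. This costs $O(k)$ time after $O(k)$ preprocessing of the class sizes, and building the path costs another $O(k)$.

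For the test, define the digraph $D_T$ on the $k$ vertices of $T$. It has an edge $u\to x$ whenever $u\in T$ is the chosen in-neighbor of $x$ in the class of $u$. Every cycle of $D_T$ uses at most one vertex per class, so any cycle of $D_T$ is a rainbow cycle of $G$. Building $D_T$ takes $O(k^2)$ time, since each of the $k$ vertices has $k-1$ stored in-neighbors and each needs a membership check, done via an array indexed by class. A DFS then finds a cycle, if one exists, in $O(k^2)$ time.

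The key claim is that whenever $T$ is the image of at least two distinct pairs, $D_T$ has a cycle. Every pair $(\sigma,w)$ with $T(\sigma,w)=T$ is a Hamiltonian path of $D_T$, and distinct pairs give distinct Hamiltonian paths. The proof of Theorem~\ref{thm:main} shows that distinct $\sigma$ force a closed walk. Alternatively, an acyclic digraph has at most one Hamiltonian path, because that path must follow its unique topological order. So if $D_T$ is acyclic, $T$ has at most one preimage.

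It remains to bound the failure probability. The number of pairs whose image set has exactly one preimage is at most the number of transversals, $Q=\prod n_i$. The total number of pairs is $N=(k-1)!\sum n_i\ge (k-1)!\,kQ^{1/k}$ by AM--GM. Hence the failure probability is at most
\[
\frac{Q}{N}\le \frac{Q^{(k-1)/k}}{k!}\le\frac{d^{k-1}}{k!}<\frac{d^{k-1}}{d^k}=\frac1d\le\frac12,
\]
using $k!>(k/e)^k>d^k$ for $k>ed$ and $d\ge2$. Each trial therefore succeeds with probability at least $1/2$, so the expected number of trials is at most $2$ and the expected total time is $O(k^2)$.

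The main obstacle is this counting step, specifically the correct uniform sampling over pairs rather than over permutations and vertices independently. I handle it by choosing the terminal class with probability proportional to its size, as described above.
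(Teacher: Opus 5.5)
Your proof is correct and follows essentially the paper's argument: choosing the terminal class with probability proportional to $n_c$ and then a uniform vertex in it is exactly the paper's uniform choice of a terminal vertex from $V(G)$, and searching $D_T$ for a cycle is equivalent to the paper's check for a backward edge $v_b\to v_a$ along the path. Your reason that failing pairs give distinct transversals (an acyclic digraph has at most one Hamiltonian path) is a cleaner phrasing of the paper's oppositely-directed-subpaths argument, and the bound $Q/((k-1)!S)\le d^{k-1}/k!<1/d$ is the same.
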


\begin{proof}
Choose a terminal vertex uniformly from $V(G)$, place its class last, and
order the other classes uniformly at random.  Construct the path
$v_1\to\cdots\to v_k$ as in the proof of Theorem~\ref{thm:main}.  Check if a
directed edge $v_b\to v_a$ exists for any $b>a$.  If so, then
\[
  v_a\longrightarrow\cdots\longrightarrow v_b\longrightarrow v_a
\]
is a rainbow cycle.  If no such edge exists, call the trial unsuccessful.

Unsuccessful pairs consisting of an ordering and a terminal vertex give
distinct transversals.  Indeed, if two gave the same transversal and their
class orders differed, the two paths would contain oppositely directed
subpaths between some pair of vertices.  Their union would contain a directed
cycle, and this cycle would have a backward edge in either path order,
contradicting failure.  If the class orders were the same, the terminal
vertices would also be the same.  Hence, with
$Q=\prod_i n_i$ and $S=\sum_i n_i$,
\[
  \Pr(\text{failure})
  \le \frac{Q}{(k-1)!S}
  \le \frac{d^{k-1}}{k!}
  <\frac1d.
\]
The middle inequality follows from the AM--GM inequality,
as in Theorem~\ref{thm:main}, and the last from
$k!>(k/e)^k>d^k$.  Thus each trial succeeds with probability greater than
$1-1/d$.  The number $N$ of independent trials until success is geometric, so
\[
  \mathbb{E}N<\frac{1}{1-1/d}=\frac{d}{d-1}\le2.
\]
Constructing the path and checking all possible backward edges takes
$O(k^2)$ time per trial.  The total expected running time is therefore
$O(k^2)$.
\end{proof}

We do not know how to derandomize Lemma~\ref{lem:randomized-recovery}; in
particular, the argument does not give a deterministic polynomial-time
algorithm at the linear threshold $k>ed$.

\section{The Rainbow Path Degree}

\citet{ChashmJahanSSS} introduced the rainbow path degree as a tool for
bounding the rainbow cycle number.  They showed that lower bounds on
$H(\ell)$ give upper bounds on $R(d)$, and their estimate
$H(\ell)=\Omega(\ell^2/\log\ell)$ yielded $R(d)=O(d\log d)$.  The same
counting argument, with the terminal vertex fixed, proves that $H(\ell)$ is
quadratic.

\begin{theorem}\label{thm:path-degree}
Let $G$ be an $\ell$-partite directed graph with no rainbow cycle, and suppose
that every vertex has an in-neighbor in every other class.  A directed path is
rainbow if it uses at most one vertex from each class.  Let $f_G(v)$ be the
number of vertices other than $v$ from which there is a directed rainbow path
to $v$.  Let $H(\ell)$ be the minimum of $f_G(v)$ over all such graphs $G$ and
all $v\in V(G)$.
Then, for every $\ell\ge2$,
\[
  H(\ell)\ge
  (\ell-1)\bigl((\ell-1)!\bigr)^{1/(\ell-1)}.
\]
In particular, $H(\ell)=\Theta(\ell^2)$.
\end{theorem}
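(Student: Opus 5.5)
We plan to rerun the counting argument from the proof of Theorem~\ref{thm:main}, this time keeping the terminal vertex fixed. Let $G$ have classes $V_1,\ldots,V_\ell$ with sizes $n_i$, and fix $v\in V_j$. As before, we choose one in-neighbor of each vertex in each other class and discard the remaining edges. This does not increase $f_G(v)$: rainbow paths in the reduced graph are rainbow paths in $G$, and no rainbow cycle is created. For each ordering $\pi$ of the $\ell-1$ classes other than $V_j$, we place $V_j$ last and walk backwards from $v$ along chosen in-neighbors. This gives a rainbow path $v_1\to\cdots\to v_{\ell-1}\to v$ meeting every class exactly once. Let $T(\pi)=\{v_1,\ldots,v_{\ell-1}\}$.

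The first key step is injectivity of $\pi\mapsto T(\pi)$, argued exactly as in Theorem~\ref{thm:main}. If $\pi\neq\sigma$ but $T(\pi)=T(\sigma)$, then two vertices $u,w$ appear in opposite orders in the two paths. The two paths then contain directed subpaths $u\to\cdots\to w$ and $w\to\cdots\to u$, all of whose vertices lie in $T\cup\{v\}$. Their union is a closed walk inside a transversal, so it contains a rainbow cycle, which is a contradiction. Hence $(\ell-1)!\le\prod_{i\ne j} m_i$, where $m_i$ is the number of vertices of $V_i$ that appear in some $T(\pi)$.

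The second step is to note that every vertex appearing in some $T(\pi)$ has a rainbow path to $v$, namely the tail of that path. So $f_G(v)\ge\sum_{i\ne j}m_i$. Each $m_i\ge1$, since every class is used. AM--GM then gives
\[
  f_G(v)\ge\sum_{i\ne j}m_i\ge(\ell-1)\Bigl(\prod_{i\ne j}m_i\Bigr)^{1/(\ell-1)}\ge(\ell-1)\bigl((\ell-1)!\bigr)^{1/(\ell-1)}.
\]
Since $(m!)^{1/m}\ge m/e$, this gives $H(\ell)\ge(\ell-1)^2/e=\Omega(\ell^2)$.

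The main obstacle is the matching upper bound $H(\ell)=O(\ell^2)$, because it needs a construction. The trivial bound $H(\ell)\le|V(G)|-1$ holds for any graph, so it suffices to exhibit, for each $\ell$, an $\ell$-partite graph with no rainbow cycle, the in-neighbor property, and $O(\ell^2)$ vertices, or at least some vertex reached from only $O(\ell^2)$ vertices. Our first attempt will be the standard lower-bound construction for $R(d)\ge d$ with $d=\ell$, which uses $\ell$ classes of size $\ell$. One candidate for that example is $V_i=\{(i,a):a\in\Z_\ell\}$, with an edge $(i,a)\to(i',a')$ whenever $a'=a+1$, together with suitable edges out of the ``top'' level $a=\ell-1$ so that no rainbow cycle closes up. If the paper's earlier references supply this construction (as in \citet{ChaudhuryGMMM}), we will cite it. It has $\ell^2$ vertices, which gives $H(\ell)\le\ell^2-1$, and combined with the lower bound this yields $H(\ell)=\Theta(\ell^2)$.
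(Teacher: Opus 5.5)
Your lower bound is the paper's proof. You fix $v$, prune to one chosen in-neighbor per class, build one path per ordering of the other $\ell-1$ classes, prove injectivity via the opposite-order/closed-walk argument, and finish with AM--GM. The differences are cosmetic. You count the vertices $m_i$ actually hit by the transversals $T(\pi)$ instead of the full sets $C_i$ of vertices with a rainbow path to $v$, which is a harmless refinement since $m_i\le c_i$. You also state explicitly that pruning edges can only decrease $f_G(v)$, which the paper leaves implicit.

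The upper bound takes a different route. The paper simply cites $H(\ell)\le(\ell-1)(\ell-2)+1$ from \citet{ChashmJahanSSS}. You instead exhibit an $\ell$-partite graph with $\ell^2$ vertices, which is a valid and self-contained alternative. Take the construction in the Tightness section with $d=\ell$: edges $(i,x)\to(j,x)$ for $i<j$ and $(i,x)\to(j,x+1)$ for $i>j$. It has the in-neighbor property and no rainbow cycle. Vertices in $v$'s own class can never start a rainbow path to $v$, so it gives $H(\ell)\le\ell(\ell-1)$. This is slightly weaker than the cited bound but enough for $\Theta(\ell^2)$.

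The explicit candidate you sketched is not that construction, and it cannot be repaired by any choice of top-level edges. Suppose the only edges leaving levels $a<\ell-1$ are $(i,a)\to(i',a+1)$. Then a vertex $(i',0)$ can receive edges only from top-level vertices. Each class has a single top-level vertex, so the in-neighbor condition forces $(i,\ell-1)\to(i',0)$ for every $i\ne i'$. Then $(1,0)\to(2,1)\to\cdots\to(\ell,\ell-1)\to(1,0)$ is a rainbow cycle. Drop that candidate and use the Tightness construction (due to \citet{ChaudhuryGMMM}) directly; with that substitution your proof is complete.
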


\begin{proof}
Fix an $\ell$-partite graph $G$ with no rainbow cycle and a vertex
$v\in V_r$.  As above, choose one in-neighbor of every vertex in each other
class.  For $i\ne r$, let
$C_i\subseteq V_i$ be the vertices from which there is a directed rainbow
path to $v$, and put $c_i=|C_i|$.
For each permutation $\pi=(\pi_1,\ldots,\pi_{\ell-1})$ of the classes other
than $V_r$, put $v_\ell=v$ and, working backwards, let
$v_t\in V_{\pi_t}$ be the chosen in-neighbor of $v_{t+1}$.  This produces
the rainbow path
\[
  v_1\longrightarrow\cdots\longrightarrow v_{\ell-1}\longrightarrow v.
\]
Let $T(\pi):=\{v_1,\ldots,v_{\ell-1}\}$; this is a transversal of the sets
$C_i$.

The map $\pi\mapsto T(\pi)$ is injective.  Indeed, if two distinct
permutations produced the same transversal, some two vertices of that
transversal would occur in opposite orders on the two resulting paths.  The
corresponding directed subpaths, one in each direction, would form a closed
directed walk inside the transversal.  This walk contains a directed cycle,
and that cycle is rainbow, a contradiction.  Therefore
\[
  (\ell-1)!\le\prod_{i\ne r}c_i.
\]
The AM--GM inequality now gives
\[
  \begin{aligned}
  f_G(v)
  &=\sum_{i\ne r}c_i\\
  &\ge(\ell-1)\left(\prod_{i\ne r}c_i\right)^{1/(\ell-1)}\\
  &\ge(\ell-1)\bigl((\ell-1)!\bigr)^{1/(\ell-1)}.
  \end{aligned}
\]
Taking the minimum over $G$ and $v$ proves the stated lower bound.  Together
with the bound $H(\ell)\le(\ell-1)(\ell-2)+1$ of
\citet{ChashmJahanSSS}, it gives $H(\ell)=\Theta(\ell^2)$.
\end{proof}

\section{Tightness}

The construction of \citet{ChaudhuryGMMM} gives the lower bound $R(d)\ge d$.
Take $d$ classes, each labeled by $\Z_d$, and include the edges
\[
  \begin{aligned}
  (i,x)&\longrightarrow(j,x) &&\text{if }i<j,\\
  (i,x)&\longrightarrow(j,x+1) &&\text{if }i>j.
  \end{aligned}
\]
Every $(j,y)$ has the in-neighbor $(i,y)$ when $i<j$ and $(i,y-1)$ when
$i>j$.  Suppose there were a rainbow cycle
\[
  (i_1,x_1)\longrightarrow\cdots\longrightarrow(i_r,x_r)
  \longrightarrow(i_1,x_1).
\]
Reading the indices cyclically, the label increases by one at the step
$i_t\to i_{t+1}$ exactly when $i_t>i_{t+1}$.  Thus the total label change is
the number of descents in the cyclic ordering $i_1,\ldots,i_r$.  This number
lies between $1$ and $r-1$, and hence between $1$ and $d-1$.  It is nonzero in
$\Z_d$, contradicting the return to the initial label.  Therefore
\[
  d\le R(d)<ed.
\]
The conjectured value remains $R(d)=d$.

\section{Other EFX Applications}

The rainbow cycle number can also be applied to other variants of \efx.

\begin{itemize}
\item \textbf{Few valuation types.}
Suppose the agents have at most $q$ distinct valuation functions.
\citet{PrakashMehtaNimbhorkar} prove that a partial
$(1-\varepsilon)$-\efx{} allocation exists with
\[
  |P|\le \frac{2q}{\varepsilon d}+R(d).
\]
They use $R(d)=O(d\log d)$ to obtain
$\widetilde O(\sqrt{q/\varepsilon})$ unallocated goods.  Using
$R(d)<ed$ instead removes the logarithmic factor and gives a partial
$(1-\varepsilon)$-\efx{} allocation with
$O(\sqrt{q/\varepsilon})$ unallocated goods.

\item \textbf{Nash welfare.}
For $X=(X_1,\ldots,X_n)$, let
$\operatorname{NW}(X)=(\prod_i v_i(X_i))^{1/n}$.
\citet{FeldmanMaurasPonitka} proved that, for every $\alpha\in[0,1]$, there
is a partial $\alpha$-\efx{} allocation whose Nash welfare is at least
$1/(\alpha+1)$ of the maximum.  Taking $\alpha=1-\varepsilon$ and then
applying the rainbow-cycle reduction gives a partial
$(1-\varepsilon)$-\efx{} allocation with
$O(\sqrt{n/\varepsilon})$ unallocated goods and
\[
  \operatorname{NW}(X)
  \ge \frac{1}{2-\varepsilon}\operatorname{NW}(X^\star).
\]
Here $X^\star$ is a maximum-Nash-welfare allocation.  Starting the
rainbow-cycle procedure from their allocation preserves this inequality,
since no agent's value decreases.
The Nash-welfare factor is due to their theorem; our bound supplies the
guarantee on the number of unallocated goods.
\end{itemize}

\section{Application to Zero-Sum Cycles}

The zero-sum cycle problem was introduced for cyclic groups by
\citet{AlonKrivelevich} and extended to finite abelian groups by
\citet{MeszarosSteiner}.  \citet{BerendsohnBK} observed that it is a special
case of the permutation rainbow cycle problem and treated arbitrary finite
groups.  We recall the reduction here.

Let $\Gamma$ be a finite group, written multiplicatively.  Define
$n(\Gamma)$ to be the least $n$ such that every labeling of the arcs of the
complete bidirected graph on $n$ vertices by elements of $\Gamma$ contains a
directed cycle whose labels, multiplied in order around the cycle, have
product equal to the identity.

This problem is a special case of the rainbow cycle problem.  Given a
$\Gamma$-labeling of the complete bidirected graph on vertices
$1,\ldots,k$, make one class
\[
  V_i=\{(i,x):x\in\Gamma\}
\]
for each vertex $i$.  If the arc $i\to j$ has label $a_{ij}$, include the
edges
\[
  (i,x)\longrightarrow(j,xa_{ij})
  \qquad (x\in\Gamma).
\]
Right multiplication by $a_{ij}$ is a permutation of $\Gamma$, so every
vertex of $V_j$ has exactly one in-neighbor in $V_i$.  Moreover, a rainbow
cycle
\[
  (i_1,x_1)\longrightarrow\cdots\longrightarrow(i_r,x_r)
  \longrightarrow(i_1,x_1)
\]
satisfies
\[
  x_1=x_1a_{i_1i_2}\cdots a_{i_ri_1}.
\]
Cancelling $x_1$ shows that the corresponding directed cycle in the original
graph has label product equal to the identity.  Conversely, every such
zero-sum cycle lifts to a rainbow cycle.  Theorem~\ref{thm:main} therefore
gives
\[
  n(\Gamma)\le R(|\Gamma|)+1<e|\Gamma|+1.
\]

The permutation structure gives a better constant.  Since the maps in the
construction above are permutations, one may instead use the permutation
rainbow number $R_p(d)$, defined by restricting every map between parts to be
a permutation.  The bound $R_p(d)\le2d-2$, proved by both
\citet{BerendsohnBK} and \citet{AkramiACGMM}, gives
\[
  n(\Gamma)\le2|\Gamma|-1
\]
for every nontrivial finite group.  \citet{ChashmJahanSSS} improved this to
$R_p(d)\le2d-4$ when $d\ge4$.  Thus, while our theorem does not improve the
best known constant, it gives a short alternate proof that
$n(\Gamma)=O(|\Gamma|)$ for every finite group.

Earlier, \citet{AlonKrivelevich} proved an $O(q\log q)$ bound for cyclic
groups, and \citet{MeszarosSteiner} proved $n(A)\le8|A|$ for finite abelian
groups.  More recently, \citet{CampbellGHS} announced the sharp bound
$n(\Gamma)\le|\Gamma|+1$ for every finite group and proved it for groups of
odd order; they state that the more involved even-order case will appear in a
second paper.  Meanwhile, \citet{Diwan} obtained a near-sharp estimate for
cyclic groups of arbitrary order.  Sharper bounds are known for
$\Gamma=\Z_p^r$ \citep{LetzterMorrison,ChristophKMS}.

The conjecture $R(d)=d$ would recover the sharp zero-sum bound above: the
lifting construction would give
\[
  n(\Gamma)\le|\Gamma|+1
\]
for every finite group $\Gamma$.  This is best possible as a function of the
group order, since the standard cyclic-group construction gives
$n(\Z_q)\ge q+1$.  The converse does not follow: group labelings produce only
right translations $x\mapsto xa$, whereas the rainbow cycle problem allows
arbitrary maps between classes.  The conjecture $R(d)=d$ remains open.

\section*{Acknowledgements}

The author would like to thank Noga Alon and Sriram Gopalakrishnan for helpful
discussions. This work was conducted while the author was a research intern at
JP Morgan AI Research.

\noindent\textbf{Declaration of AI Use:} ChatGPT 5.6 Sol Ultra was used to
assist with developing the proof, searching the literature, and drafting and
editing the manuscript. The author takes full responsibility for all
mathematical arguments, references, and claims. The author has carefully
reviewed the manuscript for clarity, organization, and accurate attribution
of prior work.

\end{document}